\DeclareMathOperator\inter{int}
\newtheorem{theorem}{Theorem}[section]
\newtheorem{proposition}[theorem]{Proposition}
\newtheorem{lemma}[theorem]{Lemma}
\newtheorem{corollary}[theorem]{Corollary}
\newtheorem{definition}[theorem]{Definition}
\author{Mark Whitmeyer\thanks{Arizona State University. Email: \href{mailto:mark.whitmeyer@gmail.com}{mark.whitmeyer@gmail.com}. I thank Joseph Whitmeyer and Kun Zhang for their comments.}}
\title{\textcolor{ForestGreen}{Call the Dentist! A (Con-)Cavity in the Value of Information}}
\begin{document}

\begin{abstract}
A natural way of quantifying the ``amount of information'' in decision problems yields a globally concave value for information. Another (in contrast, adversarial) way almost never does.
\end{abstract}

\section{\textcolor{ForestGreen}{Introduction}}

We take another look at the classical nonconcavity of the value of information. This observation originates with \cite*{radner1984nonconcavity}, who point out that the value of information may not exhibit diminishing marginal returns. Since then, a number of papers have explored this idea further, including \cite*{Chade}, \cite*{de2007tight}, and \cite*{lara2020payoffs}.

An illustration as to why the value of information may not be concave  in its ``quantity,'' and in particular why the marginal value of information at quantity \(0\) may be zero goes something as follows. Take a decision problem with finitely many undominated actions. This means that the set of beliefs at which an action is optimal is a convex body: it is compact, convex, and possesses a non-empty interior. Accordingly, for a prior in the interior of one of these regions any information that does not move the prior much is worthless.


However, it is difficult to quantify information and as noted by \cite*{radner1984nonconcavity}, ``the marginal productivity of information depends, of course, on the way the quantity of information is measured.'' They, as well as the aforementioned works, specify particular parametrizations of the quantity of information, which allows them to derive their results. The point of this paper is to argue that a natural way of quantifying information yields a value of information that is concave in its quantity.

Here is that quantification. Equating information with a Bayes-plausible distribution over posteriors \(F\) (given some prior), we specify the \textcolor{ForestGreen}{Quantity of Information} to be a convex function of the expected (convex) divergence of the posterior. Furthermore, as many different distributions over posteriors can yield the same quantity, we select distributions that are optimal for a decision-maker's (DM's) decision problem. That is all we need: in the main result of the paper we reveal that the value of information is concave in its amount, and so in non-trivial decision problems the marginal value of information is strictly positive at zero.

On the other hand, if we take the opposite approach, selecting distributions that are worst for a DM's decision problem, we discover that the value of information is almost never concave in its amount. In particular, the marginal value of information is almost always zero at zero. We finish the paper by departing from the expected-utility realm, and show that if the DM takes a max-min approach, and experiments are chosen benevolently, the value of information remains concave in its amount.

Essentially, these results follow from replication arguments. First, regardless of whether the DM is an expected-utility (EU) maximizer or a max-min (MEU) DM or whether information is chosen adversarially or benevolently, the quantity-of-information constraint binds: any optimal distribution must contain the maximal (or minimal, in the adversarial setting) ``amount'' of information. Second, given the first observation, the linearity of expectation means that an average of experiments always remains feasible for any average of information ``amounts.'' This produces the result (note that for an MEU DM, the impulse toward concavity is even stronger, as nature has less strategic freedom to choose her prior for a convex combination of ``amounts'').

\section{\textcolor{SeaGreen}{Model and Results}}

There is a single, Bayesian decision-maker (DM). There is an unknown state of the world \(\theta\) lying in some finite set of states \(\Theta\). \(\theta\) is distributed according to some full-support prior \(\mu \in \inter \Delta \left(\Theta\right)\). Information arrives according to a signal, stochastic map \(\pi \colon \Theta \to \Delta \left(S\right)\), where \(S\) is a compact set of signal realizations. Equivalently, information corresponds to a Bayes-plausible distribution \(F \in \mathcal{F}_\mu \subset \Delta^2\) over posteriors \(x \in \Delta\).\footnote{Given prior \(\mu\), the set of Bayes-plausible distributions, \(\mathcal{F}_\mu\), is \(\left\{F \in \Delta^2 \colon \mathbb{E}_F x = \mu\right\}\).}

The DM is an expected-utility maximizer with a compact set of actions, \(A\), available to her. She has a continuous utility function \(u \colon A \times \Theta \to \mathbb{R}\). For any posterior, we note the DM's value function,
\[V(x) \coloneqq \max_{a \in A} \mathbb{E}_{x}  u(a,\theta) \text{,}\] which we impose is not affine; \textit{viz.,} \(A\) contains at least two undominated actions.

Here is how we quantify the amount of information provided to the decision maker.
\begin{definition}
    Given function \(c\colon \Delta^2 \to \mathbb{R}\) satisfying i. \(c(x,\mu)\) is strictly convex, and ii. \(c(\mu,\mu) = 0\) for all \(\mu \in \Delta\); and (weakly) convex, strictly increasing \(\phi \colon \mathbb{R} \to \mathbb{R}\) satisfying \(\phi\left(0\right) = 0\), the \textcolor{ForestGreen}{Amount of Information} contained in the distribution over posteriors \(F\) is
\[D\left(F\right) \coloneqq \phi\left[\int_{\Delta} c\left(x,\mu\right)dF\left(x\right)\right]\text{.}\]
\end{definition}
Function \(c\) is a convex \textcolor{ForestGreen}{Divergence}, which are of central importance to models of flexible costly information acquisition (\cite*{bloedel2020cost}). One such \(c\) is the difference in Shannon entropy between the prior \(\mu\) and the posterior \(x\). An especially compelling justification of this definition of the ``amount of information,'' is \cite*{mensch2018cardinal} (for an affine \(\phi\)). In particular, his Theorem 1 reveals that this definition is the unique representation of preferences over experiments satisfying certain axioms. \cite*{denti1} contains a similar result.

For \(\eta \in \left[0,\bar{\eta}\right]\) (where \(\bar{\eta} \in \mathbb{R}_{++}\)), we define function \(W\colon \mathbb{R} \to \mathbb{R}\) to be
\[\label{starprog}\tag{{\color{OrangeRed} \(\clubsuit\)}} W\left(\eta\right) \coloneqq \max_{F \in \mathcal{F}_{\mu}, \ D(F) \leq \eta} \mathbb{E}_{F}V(x) \text{.}\]
That is, \(W\) is the DM's value for the amount of information \(\eta\) evaluated at an optimal distribution over posteriors \(F\) of amount \(\eta\). We term \(W\) the \textcolor{ForestGreen}{Efficient Value of Information}.

Let \[\mathcal{F}_{\mu}^* \coloneqq \left\{F \in \mathcal{F}_\mu \colon \mathbb{E}_F V \geq \max_{F' \in \mathcal{F}_{\mu}} \mathbb{E}_{F'}V\right\}\text{.}\]
We impose that \[\bar{\eta} < \inf_{F \in \mathcal{F}_{\mu}^*} D(F)\text{.}\]
That is, \(\mathcal{F}_{\mu}^*\) is the set of Bayes-plausible distributions over posteriors that optimize \(\mathbb{E}_F V\). As full information optimizes \(\mathbb{E}_F V\), \(\mathcal{F}_{\mu}^*\) is nonempty. As \(V\) is not affine, \(\inf_{F \in \mathcal{F}_{\mu}^*} D(F) > 0\). Our imposition means that the amount of information provided to the DM is less than any amount that yields the maximal value in the decision problem. This restriction engenders Lemma \ref{bind1}, though the concavity of the efficient value of information holds with or without it.

We say that the amount-of-information constraint binds if \(D(F^*) = \eta\) for any solution \(F^*\) to Program \ref{starprog}.
\begin{lemma}\label{bind1}
    The amount-of-information constraint binds.
\end{lemma}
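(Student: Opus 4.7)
The plan is to argue by contradiction. Suppose some optimal $F^*$ for Program \ref{starprog} satisfies $D(F^*) < \eta$. The standing imposition $\bar\eta < \inf_{F \in \mathcal{F}_{\mu}^*} D(F)$ means that no member of $\mathcal{F}_\mu^*$ is feasible for $W(\eta)$, so in particular $F^* \notin \mathcal{F}_\mu^*$ and hence $\mathbb{E}_{F^*}V < \max_{F' \in \mathcal{F}_\mu} \mathbb{E}_{F'}V$. The goal is to build a feasible perturbation of $F^*$ that strictly raises $\mathbb{E}_F V$, contradicting optimality.

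Next I would fix any $F^{**} \in \mathcal{F}_\mu^*$ (nonempty, as full revelation is a maximizer of $\mathbb{E}_F V$) and set $F_\lambda \coloneqq (1-\lambda)F^* + \lambda F^{**}$ for $\lambda \in (0,1)$. Bayes-plausibility is preserved under mixture, so $F_\lambda \in \mathcal{F}_\mu$. By linearity of expectation,
\[
\mathbb{E}_{F_\lambda} V = (1-\lambda)\mathbb{E}_{F^*}V + \lambda \mathbb{E}_{F^{**}}V > \mathbb{E}_{F^*}V
\]
for every $\lambda > 0$, since $\mathbb{E}_{F^{**}}V > \mathbb{E}_{F^*}V$.

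It then remains to show $F_\lambda$ is feasible for small $\lambda$. Applying linearity to the inner integral,
\[
\int_{\Delta} c(x,\mu)\,dF_\lambda(x) = (1-\lambda)\int_{\Delta} c(x,\mu)\,dF^*(x) + \lambda \int_{\Delta} c(x,\mu)\,dF^{**}(x),
\]
and both integrals are finite because $c(\cdot,\mu)$ is convex, hence continuous on the compact simplex $\Delta$. Thus $\int c\,dF_\lambda \to \int c\,dF^*$ as $\lambda \downarrow 0$. Because $\phi \colon \mathbb{R} \to \mathbb{R}$ is convex and real-valued it is continuous, so $D(F_\lambda) \to D(F^*) < \eta$, and therefore $D(F_\lambda) < \eta$ for all sufficiently small $\lambda > 0$. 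Such an $F_\lambda$ is feasible and strictly better than $F^*$, the desired contradiction.

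The argument is essentially a one-line convex-combination improvement; the only point requiring any care is the continuity of $D$ along the mixture path, and that is immediate from the convexity (hence continuity) of $\phi$ together with the linearity of $F \mapsto \int c(x,\mu)\,dF(x)$. Note that strict convexity of $c$ and the structure of the decision problem are not needed here — the assumption $\bar\eta < \inf_{F \in \mathcal{F}_\mu^*} D(F)$ does all the work by guaranteeing room to mix in a bit of an unconstrained optimizer.
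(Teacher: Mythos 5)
Your proof is correct and follows essentially the same route as the paper's: argue by contradiction, mix the purported optimum with a member of \(\mathcal{F}_\mu^*\) (the paper uses the fully informative \(\bar{F}\)), and use continuity of \(D\) along the mixture path to retain feasibility while strictly improving \(\mathbb{E}_F V\). The only cosmetic difference is that you take \(\lambda\) small to stay strictly inside the constraint, whereas the paper invokes the intermediate-value theorem to hit \(D(F_{\lambda^*}) = \eta\) exactly; both steps do the same job.
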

\begin{proof}
    Suppose for the sake of contraction not. Take an arbitrary purported optimum \(F\). Let \(\bar{F}\) be the (Bayes-plausible) distribution over posteriors supported on the vertices of the simplex (induced by a fully informative experiment). Define \(F_{\lambda} \coloneqq \lambda F + \left(1-\lambda\right) \bar{F}\) for \(\lambda \in \left[0,1\right]\). By the continuity of \(\phi\), \(D(F_\lambda)\) is continuous in \(\lambda\), so, by the intermediate-value theorem, there exists a \(\lambda^* \in \left(0,1\right)\) such that \(D(F_{\lambda^*}) = \eta\). Moreover, as \(\eta \leq \bar{\eta} < \mathbb{E}_{\bar{F}}\),
    \[\mathbb{E}_{F}V < \lambda^* \mathbb{E}_{F}V + \left(1-\lambda^*\right)\mathbb{E}_{\bar{F}}V =  \mathbb{E}_{\lambda^* F + \left(1-\lambda^*\right)\bar{F}}V\text{,}\]
    which contradicts the optimality of \(F\). \end{proof}

Now, our first result: the efficient value of information is concave.
\begin{proposition}\label{weakconv}
    The efficient value of information, \(W\), is weakly concave in \(\eta\), strictly so if \(\phi\) is strictly convex.
\end{proposition}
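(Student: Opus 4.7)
The plan is to prove concavity via a direct mixing argument on distributions, exploiting the binding constraint from Lemma \ref{bind1} and the convexity of $\phi$.

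First I would fix $\eta_1,\eta_2\in[0,\bar\eta]$ and $\lambda\in(0,1)$, and let $F_1^*,F_2^*$ be optimizers for $W(\eta_1)$ and $W(\eta_2)$. By Lemma \ref{bind1}, the constraint binds, so $D(F_i^*)=\eta_i$. Writing $I_i\coloneqq\int_\Delta c(x,\mu)\,dF_i^*(x)$, this says $\phi(I_i)=\eta_i$, and since $\phi$ is strictly increasing we may invert to get $I_i=\phi^{-1}(\eta_i)$ on the relevant range.

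Next, I would form the mixture $F_\lambda\coloneqq\lambda F_1^*+(1-\lambda)F_2^*$. This is Bayes-plausible by linearity of expectation. By linearity of the inner integral,
\[\int_\Delta c(x,\mu)\,dF_\lambda(x)=\lambda I_1+(1-\lambda)I_2=\lambda\phi^{-1}(\eta_1)+(1-\lambda)\phi^{-1}(\eta_2).\]
The key analytic input is that when $\phi$ is increasing and convex, $\phi^{-1}$ is increasing and concave, so applying $\phi$ to both sides of Jensen's inequality for $\phi^{-1}$ yields
\[D(F_\lambda)=\phi\bigl(\lambda\phi^{-1}(\eta_1)+(1-\lambda)\phi^{-1}(\eta_2)\bigr)\leq \lambda\eta_1+(1-\lambda)\eta_2.\]
Thus $F_\lambda$ is feasible for the program defining $W(\lambda\eta_1+(1-\lambda)\eta_2)$, and so
\[W(\lambda\eta_1+(1-\lambda)\eta_2)\geq\mathbb{E}_{F_\lambda}V=\lambda W(\eta_1)+(1-\lambda)W(\eta_2),\]
establishing weak concavity.

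For strict concavity when $\phi$ is strictly convex and $\eta_1\neq\eta_2$, the inequality from $\phi^{-1}$ becomes strict, so $D(F_\lambda)<\lambda\eta_1+(1-\lambda)\eta_2$. Here I would recycle the mixing trick from Lemma \ref{bind1}: take a convex combination $G\coloneqq\alpha F_\lambda+(1-\alpha)\bar F$ with $\alpha\in(0,1)$ chosen via the intermediate-value theorem so that $D(G)=\lambda\eta_1+(1-\lambda)\eta_2$; this is possible because $F_\lambda$'s amount is strictly below the target and $\bar F$'s amount exceeds $\bar\eta$. Since $\lambda\eta_1+(1-\lambda)\eta_2\leq\bar\eta<\inf_{F\in\mathcal F_\mu^*}D(F)$, the distribution $F_\lambda$ is not unconstrained-optimal, so $\mathbb{E}_{\bar F}V>\mathbb{E}_{F_\lambda}V$, and hence $\mathbb{E}_G V>\mathbb{E}_{F_\lambda}V=\lambda W(\eta_1)+(1-\lambda)W(\eta_2)$. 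The main (mild) obstacle is checking that $F_\lambda$ is strictly suboptimal in the unconstrained problem so that mixing with $\bar F$ is a strict improvement; everything else is direct.
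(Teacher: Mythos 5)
Your proposal is correct and follows essentially the same route as the paper: mix the two binding optimizers, note that $D$ is convex in $F$ (your detour through concavity of $\phi^{-1}$ is equivalent to the paper's direct Jensen step on $\phi$), and use linearity of expectation to get weak concavity. The only difference is in closing strict concavity: you explicitly construct a strictly better feasible distribution by re-mixing $F_\lambda$ with $\bar F$ via the intermediate-value theorem, whereas the paper gets the same strict inequality by appealing to Lemma \ref{bind1} (a slack $F_\lambda$ cannot be optimal); both are valid.
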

\begin{proof}
Take distinct \(\eta_1\) and \(\eta_2\) and let \(F_1\) and \(F_2\) be respective optimizers of Program \ref{starprog}. Thus, by Lemma \ref{bind1}, \(D\left(F_1\right) = \eta_1\) and \(D\left(F_2\right) = \eta_2\), and so (if \(\phi\) is strictly convex)
    \[\upsilon \eta_1 + \left(1-\upsilon\right) \eta_2 = \upsilon D\left(F_1\right) + \left(1-\upsilon\right)D\left(F_2\right) \underset{(>)}{\geq} D\left(\upsilon F_1 + \left(1-\upsilon\right) F_2\right)\text{.}\] 
    Thus, \(\upsilon F_1 + \left(1-\upsilon\right) F_2\) is always feasible for information amount \(\upsilon \eta_1 + \left(1-\upsilon\right) \eta_2 \), strictly so if \(\phi\) is strictly convex. Accordingly, (if \(\phi\) is strictly convex)
    \[W(\upsilon \eta_1 + \left(1-\upsilon\right) \eta_2) \underset{(>)}{\geq} \mathbb{E}_{\upsilon F_1 + \left(1-\upsilon\right) F_2} V = \upsilon \mathbb{E}_{F_1}V + \left(1-\upsilon\right)\mathbb{E}_{F_2}V\text{,}\]
    where the inequality is implied by the binding information-amount constraint and the equality by the linearity of expectation. \end{proof}
    As there are at least two undominated actions,
    \begin{corollary}
        The marginal value of information at \(0\) is strictly positive.
    \end{corollary}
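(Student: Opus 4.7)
The plan is to lean on the concavity of $W$ established in Proposition \ref{weakconv} and reduce the claim to finding a single $\eta^{\dagger} \in (0, \bar{\eta}]$ at which $W(\eta^{\dagger}) > W(0)$. First I would verify that $W(0) = V(\mu)$: strict convexity of $c(\cdot,\mu)$ together with Bayes-plausibility and Jensen's inequality forces $D(F) > 0$ whenever $F \neq \delta_{\mu}$, so the only feasible distribution at $\eta = 0$ is the Dirac at the prior. Concavity of $W$ on $[0, \bar{\eta}]$ then makes the slope function $\eta \mapsto (W(\eta) - V(\mu))/\eta$ non-increasing, so its right limit at $0$ equals its supremum over $(0, \bar{\eta}]$; exhibiting a single $\eta^{\dagger}$ at which $W(\eta^{\dagger}) > V(\mu)$ therefore closes the argument.

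To produce such an $\eta^{\dagger}$, I would pick any $F^{*} \in \mathcal{F}_{\mu}^{*}$. This set is nonempty (full information lies in it), and because $V$ is not affine thanks to the hypothesis of two undominated actions, $\mathbb{E}_{F^{*}} V > V(\mu)$. Since the restriction $\bar{\eta} < \inf_{F \in \mathcal{F}_{\mu}^{*}} D(F)$ forbids using $F^{*}$ directly, I would shrink it toward the prior via $F_{\lambda} \coloneqq \lambda F^{*} + (1-\lambda)\delta_{\mu}$ for $\lambda \in (0,1]$. Linearity of the inner integral and $c(\mu,\mu) = 0$ yield $D(F_{\lambda}) = \phi(\lambda K)$, where $K \coloneqq \int_{\Delta} c(x,\mu)\,dF^{*}(x)$ is strictly positive by strict convexity of $c(\cdot,\mu)$ and $F^{*} \neq \delta_{\mu}$. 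Continuity of $\phi$ with $\phi(0) = 0$ then lets me pick $\lambda$ small enough that $\eta^{\dagger} \coloneqq D(F_{\lambda}) \in (0, \bar{\eta}]$.

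At this $\eta^{\dagger}$, $F_{\lambda}$ is feasible for Program \ref{starprog}, so $W(\eta^{\dagger}) \geq \mathbb{E}_{F_{\lambda}} V = V(\mu) + \lambda(\mathbb{E}_{F^{*}} V - V(\mu)) > V(\mu)$, finishing the proof. The only delicate point---what I would flag as the main obstacle---is the bound $\bar{\eta} < \inf_{F \in \mathcal{F}_{\mu}^{*}} D(F)$, which blocks a one-line appeal to any optimizer of $\mathbb{E}_{F} V$; the mixing-with-$\delta_{\mu}$ trick is precisely the device needed to manufacture a feasible experiment that still strictly improves on $V(\mu)$.
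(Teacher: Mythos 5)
Your proof is correct and follows exactly the route the paper intends: the paper states the corollary with no explicit proof beyond invoking the two-undominated-actions hypothesis, and the implicit argument is precisely your combination of the concavity from Proposition \ref{weakconv} with the existence of some feasible $\eta^{\dagger}>0$ at which $W(\eta^{\dagger})>W(0)=V(\mu)$. Your mixing of an optimizer $F^{*}$ with $\delta_{\mu}$ to stay under the bound $\bar{\eta}$, and the Jensen argument pinning down $W(0)=V(\mu)$, are just the details the paper leaves unstated, so this is essentially the same approach, carried out carefully.
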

    Our next result provides a sufficient condition for \(W\) to be strictly concave when \(\phi\) is linear.
    \begin{proposition}\label{weakconvii}
    If there are just two undominated actions, the efficient value of information, \(W\), is strictly concave.
\end{proposition}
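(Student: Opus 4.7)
The plan is to deduce strict concavity from a structural lemma: with only two undominated actions, every optimum of Program \ref{starprog} at $\eta > 0$ is supported on exactly two posteriors, one in each of the two regions $X_1, X_2 \subseteq \Delta$ on which $V$ coincides with the respective (affine) action utility. Once this structure is established, the contradiction is quick.

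To prove the lemma, I would take any optimum $F$ at $\eta > 0$ and construct its ``two-point reduction'' $\tilde F \coloneqq F(X_1)\delta_{\mu_1} + F(X_2)\delta_{\mu_2}$, where $\mu_i \coloneqq \mathbb{E}_F[x \mid x \in X_i]$. Three checks: Bayes-plausibility of $\tilde F$ is immediate from the law of total expectation; $\mathbb{E}_{\tilde F} V = \mathbb{E}_F V$ because $V$ is affine on each $X_i$; and by Jensen applied to the strictly convex $c(\cdot, \mu)$ together with the strict monotonicity of $\phi$, $D(\tilde F) \leq D(F)$, with strict inequality unless $F$ is already supported on at most one point in each $X_i$. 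Any slack would make $\tilde F$ a second optimum at $\eta$ with $D(\tilde F) < \eta$, contradicting Lemma \ref{bind1}; hence $F = \tilde F$. The degenerate case in which all of $F$'s mass lies in a single $X_i$ is ruled out because then $\mathbb{E}_F V = V(\mu)$, strictly dominated by $W(\eta)$ for $\eta > 0$ by the corollary after Proposition \ref{weakconv}.

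With the lemma in hand, suppose for contradiction that $W(\eta^*) = \upsilon W(\eta_1) + (1-\upsilon) W(\eta_2)$ for distinct $\eta_1, \eta_2$ and $\upsilon \in (0,1)$, with $\eta^* \coloneqq \upsilon \eta_1 + (1-\upsilon) \eta_2$ and respective optima $F_1, F_2$. Because $\phi$ is linear, $D$ is linear on mixtures, so $D(\upsilon F_1 + (1-\upsilon) F_2) = \eta^*$; hence $F^* \coloneqq \upsilon F_1 + (1-\upsilon) F_2$ is feasible at $\eta^*$ and attains $W(\eta^*)$, so it is optimal. The lemma forces $|\supp(F^*)| = 2$, and since $\supp(F^*) = \supp(F_1) \cup \supp(F_2)$, both $F_i$ must share the same two-point support $\{x_1, x_2\}$ with $x_i \in X_i$. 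But Bayes-plausibility $p x_1 + (1-p) x_2 = \mu$ uniquely determines $p$, so $F_1 = F_2$ and $\eta_1 = D(F_1) = D(F_2) = \eta_2$---a contradiction. The chief technical care required is the treatment of posteriors on $X_1 \cap X_2$, which I would handle by assigning boundary points to one region by convention, with no effect on the analysis since $V$ is affine and single-valued on that hyperplane.
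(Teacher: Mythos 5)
Your proof is correct and follows essentially the paper's own route: the paper's terse argument rests on exactly the two claims you establish, namely that every optimizer must be binary (one posterior in each action's region) and that a mixture of distinct binary optimizers is then strictly suboptimal at the averaged amount. You simply supply the details the paper leaves implicit---the two-point reduction via Jensen's inequality combined with Lemma \ref{bind1}, and the uniqueness of the Bayes-plausible distribution on a two-point support, which rules out the shared-support case.
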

\begin{proof}
As there are just two actions, for any \(\eta \in \left[0,\bar{\eta}\right]\) any optimizer must be binary. Accordingly, for any distinct \(\eta_1\) and \(\eta_2\) with respective optimizers of Program \ref{starprog} \(F_1\) and \(F_2\), \(\nu F_1 + (1-\nu) F_2\) is strictly sub-optimal for information amount \(\nu \eta_1 + (1-\nu) \eta_2\).
\end{proof}
It is straightforward to construct a decision problem with three actions that is such that \(W\) is linear over a sub-interval of \(\left[0,\bar{\eta}\right]\).

An alternative formulation of the value of information almost never produces a concave value of information. In contrast to \(W\), which is generated by maximally efficient information acquisition, our next formulation is one in which information is acquired in a maximally inefficient manner. For \(\eta \in \left[0,\bar{\eta}\right]\) (where \(\bar{\eta} \in \mathbb{R}_{++}\)), we define function \(U\colon \mathbb{R} \to \mathbb{R}\) to be
\[\label{inefprog}\tag{{\color{OrangeRed} \(\heartsuit\)}} U\left(\eta\right) \coloneqq \min_{F \in \mathcal{F}_{\mu}, \ D(F) \geq \eta} \mathbb{E}_{F}V(x) \text{.}\]
We term \(U\) the \textcolor{ForestGreen}{Inefficient Value of Information}.

We say a value of information is \textcolor{ForestGreen}{Almost Never Concave} if the set of priors \(\mu\) at which the value of information is \(0\) for all \(\eta \in \left[0,\hat{\eta}_{\mu}\right]\) (\(\hat{\eta}_\mu > 0\)) is dense in \(\Delta\).
\begin{proposition}
    Let \(\Theta\) and the number of undominated actions be finite. Then the inefficient value of information is almost never concave.
\end{proposition}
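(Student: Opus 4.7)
The plan is to exploit the piecewise-affine structure of $V$ on $\Delta$: whenever the prior $\mu$ lies in the interior of a cell on which $V$ is affine, the value of information (relative to no information) vanishes on an entire initial segment $[0,\hat{\eta}_\mu]$, and the set of such priors is open and dense in $\Delta$. As a preliminary reduction, convexity of $V$ together with Bayes-plausibility gives $\mathbb{E}_F V(x) \geq V(\mu)$ for every feasible $F$, with equality at $\eta = 0$ via $F = \delta_\mu$, and $\eta \mapsto U(\eta)$ is weakly increasing because enlarging $\eta$ shrinks the feasible set. Hence, for each fixed $\mu$ at which the claim is to hold, it suffices to produce a single Bayes-plausible $\bar{F}$ with $D(\bar{F}) > 0$ and $\mathbb{E}_{\bar{F}} V = V(\mu)$, and then set $\hat{\eta}_\mu \coloneqq \min\{D(\bar{F}),\bar{\eta}\}$, which yields $U(\eta) = V(\mu)$ throughout $[0,\hat{\eta}_\mu]$.

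For the cell decomposition, for each undominated action $a$ I would define $X_a \coloneqq \{x \in \Delta \colon \mathbb{E}_x u(a,\theta) \geq \mathbb{E}_x u(a',\theta) \text{ for all } a' \in A\}$. Each $X_a$ is a closed convex polytope on which $V$ coincides with the affine function $x \mapsto \mathbb{E}_x u(a,\theta)$, and the finitely many $X_a$ cover $\Delta$. The set $\bigcup_a \partial X_a$ is therefore a finite union of lower-dimensional closed subsets of $\Delta$, so it has empty interior relative to $\Delta$, and its complement $\mathcal{O} \coloneqq \bigcup_a \inter X_a$ is open and dense in $\Delta$.

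For the witness construction, I would fix $\mu \in \mathcal{O}$ together with a cell $X_a$ containing $\mu$ in its relative interior. Since $\inter X_a$ is a relatively open neighborhood of $\mu$, I can select distinct $x_0,y_0 \in \inter X_a$ and a weight $t \in (0,1)$ with $t x_0 + (1-t) y_0 = \mu$; the two-point Bayes-plausible distribution $\bar{F} \coloneqq t\delta_{x_0} + (1-t)\delta_{y_0}$ then satisfies $\mathbb{E}_{\bar{F}} V = V(\mu)$ by affinity of $V$ on $X_a$, while strict convexity of $c(\cdot,\mu)$, the normalization $c(\mu,\mu) = 0$, Jensen's inequality, and strict monotonicity of $\phi$ together ensure $D(\bar{F}) > 0$. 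The main thing to watch for is the implicit requirement that the relevant cells have nonempty relative interior, which is precisely the content of the density step above and is automatic because finitely many closed convex sets covering $\Delta$ cannot all be lower-dimensional.
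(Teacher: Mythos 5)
Your proposal is correct and takes essentially the same route as the paper: decompose \(\Delta\) into the finitely many polytopes on which \(V\) is affine, observe that the union of their interiors is dense, and for any prior in such an interior exhibit a Bayes-plausible distribution supported inside the cell with strictly positive information amount, so that \(U(\eta)=U(0)\) on an initial interval \([0,\hat{\eta}_\mu]\). The only cosmetic difference is that you use a two-point distribution inside the cell together with an explicit Jensen lower bound \(U(\eta)\geq V(\mu)\), whereas the paper spreads mass on the cell's extreme points.
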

\begin{proof}
    As the number of undominated actions is finite \(V\) is the maximum of a finite number of affine functions. We may project \(V\) onto \(\Delta\), yielding a finite collection polytopes \(C\). \(\S2.1\) of \cite*{flexibilitypaper} contains more information about this object. On each element \(C_i \in C\), \(V\) is affine. Moreover, by construction \(\mathcal{C} \coloneqq \cup_{i = 1}^{m} \inter C_i\) is dense in \(\Delta\) (where \(m\) is the number of undominated actions in \(A\)). 


    Observe that for any \(C_i\) and any prior \(\mu \in \inter C_i\), the DM is indifferent between no information and any (Bayes-plausible) distribution over posteriors supported on \(C_i\). Evidently, for any \(C_i\), any \(\mu \in \inter C_i\), and any Bayes-plausible distribution \(G\) supported on the extreme points of \(C_i\), \(D(G) > 0\). Consequently, for any \(\mu \in \inter C_i\), there exists a cost threshold \(\hat{\eta}_\mu > 0\) such that for all \(\eta \in \left[0,\hat{\eta}_\mu\right]\), any solution to Program \ref{inefprog} is supported on a subset of \(C_i\). Thus, for any \(\eta \in \left[0,\hat{\eta}_\mu\right]\), \(U(\eta) = U(0)\). \end{proof}

\section{A Max-min DM}

Now, suppose that the DM is not an expected-utility maximizer, but instead evaluates decisions according to a max-min criterion \`{a} la \cite{gilboa1989maxmin}. Let \(A\) be finite. There is a compact and convex subset of feasible priors \(M \subseteq \Delta\), which we specify is of full-dimension in \(\Delta\). Following \cite{ccelen2012informativeness}, the value of experiment \(\pi\) to the DM is 
\[T\left(\pi\right) \coloneqq \max_{\sigma} \min_{\mu \in M} \mathbb{E}_{B(\mu, \pi)}\mathbb{E}_x \mathbb{E}_{\sigma}u(a,\theta)\text{,}\]
where \(\sigma \colon S \to \Delta\left(A\right)\) is the DM's signal-dependent choice of action, and \(B\left(\mu,\pi\right) \in \Delta^2\) is the Bayes-map that takes as input prior \(\mu\) and statistical experiment \(\pi\) and outputs a distribution over posteriors \(B\left(\mu,\pi\right)\).\footnote{Please refer to, e.g., \cite*{denti1} or \cite*{denti2}.}

\begin{definition}
    Given function \(c\colon \Delta^2 \to \mathbb{R}\) satisfying i. \(c(x,\mu)\) is strictly convex, and ii. \(c(\mu,\mu) = 0\) for all \(\mu \in \Delta\); element \(\tilde{\mu} \in \inter \Delta\); and (weakly) convex, strictly increasing \(\phi \colon \mathbb{R} \to \mathbb{R}\) satisfying \(\phi\left(0\right) = 0\), the \textcolor{ForestGreen}{Amount of Information} contained in experiment \(\pi\) is
\[C\left(\pi\right) \coloneqq \phi\left[\int_{\Delta} c\left(x,\tilde{\mu}\right)dB\left[\tilde{\mu},\pi\right]\left(x\right)\right]\text{.}\]
\end{definition}

Now, for \(\eta \in \left[0,\bar{\eta}\right]\) (where \(\bar{\eta} \in \mathbb{R}_{++}\)), we define function \(\bar{W}\colon \mathbb{R} \to \mathbb{R}\) to be
\[\label{bigstar}\tag{\textcolor{OrangeRed}{\(\spadesuit\)}}\bar{W}\left(\eta\right) \coloneqq \max_{\pi, \ C(\pi) \leq \eta} T(\pi) \text{.}\]
We term \(\bar{W}\) the \textcolor{ForestGreen}{Max-min Value of Information}. As we did before, we bound the informativeness of \(\pi\): let\[\Pi^* \coloneqq \left\{\pi \colon T(\pi) = T(\bar{\pi})\right\}\text{,}\] where \(\bar{\pi}\) is the fully informative experiment. We impose that \[\bar{\eta} < \inf_{\pi \in \Pi^*} C(\pi)\text{.}\]

    \begin{lemma}\label{bind2}
    The amount-of-information constraint binds.
\end{lemma}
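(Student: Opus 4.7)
The plan is to follow the argument of Lemma \ref{bind1}, now mixing experiments rather than distributions over posteriors. Suppose, for contradiction, that some optimal $\pi$ for Program \ref{bigstar} has $C(\pi)<\eta$. Let $\pi_\lambda$ denote the \emph{tagged mixture} of $\pi$ and $\bar{\pi}$: first draw a publicly observed $\lambda$-biased coin, then run $\pi$ on heads and $\bar{\pi}$ on tails. For every prior $\mu$, the induced distribution over posteriors decomposes as $B(\mu,\pi_\lambda)=\lambda B(\mu,\pi)+(1-\lambda)B(\mu,\bar{\pi})$; specializing to $\mu=\tilde{\mu}$ and pulling the convex combination into $\phi$'s argument, $C(\pi_\lambda)=\phi[\lambda I(\pi)+(1-\lambda)I(\bar{\pi})]$, where $I(\pi')\coloneqq\int c(x,\tilde{\mu})\,dB(\tilde{\mu},\pi')(x)$. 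This is continuous in $\lambda$, equals $C(\pi)<\eta$ at $\lambda=1$, and equals $C(\bar{\pi})\geq\inf_{\pi'\in\Pi^*}C(\pi')>\bar{\eta}\geq\eta$ at $\lambda=0$, so the intermediate-value theorem supplies some $\lambda^*\in(0,1)$ with $C(\pi_{\lambda^*})=\eta$, making $\pi_{\lambda^*}$ feasible.

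Next I would bound $T(\pi_{\lambda^*})$ from below. Because the DM observes the coin, her strategy on $\pi_{\lambda^*}$ splits into a pair $(\sigma_1,\sigma_2)$, one per branch, and her expected payoff at prior $\mu$ decomposes as $\lambda^*\mathbb{E}_{B(\mu,\pi)}\mathbb{E}_x\mathbb{E}_{\sigma_1}u+(1-\lambda^*)\mathbb{E}_{B(\mu,\bar{\pi})}\mathbb{E}_x\mathbb{E}_{\sigma_2}u$. Using $\min_\mu(f+g)\geq\min_\mu f+\min_\mu g$ and then maximizing over $\sigma_1$ and $\sigma_2$ independently yields $T(\pi_{\lambda^*})\geq\lambda^* T(\pi)+(1-\lambda^*)T(\bar{\pi})$. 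The chain $C(\pi)<\eta\leq\bar{\eta}<\inf_{\pi'\in\Pi^*}C(\pi')$ forces $\pi\notin\Pi^*$, so $T(\bar{\pi})>T(\pi)$, whence $T(\pi_{\lambda^*})>T(\pi)$, contradicting the optimality of $\pi$.

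The main obstacle is conceptual: choosing the right mixing operation on experiments. The tagged (disjoint-union) mixture is crucial on two fronts. First, it makes $B(\cdot,\pi_\lambda)$ linear in $\lambda$, so $C(\pi_\lambda)$ becomes a continuous interpolation between $C(\pi)$ and $C(\bar{\pi})$ that enables the IVT step; merely averaging conditional distributions over a common signal space would generally destroy this linearity. Second, the tag leaves the DM with enough strategic flexibility to secure the super-additive bound $T(\pi_{\lambda^*})\geq\lambda^*T(\pi)+(1-\lambda^*)T(\bar{\pi})$. Pleasantly, the max-min criterion helps rather than hurts here, since this pointwise payoff decomposition survives the $\min$ by the trivial $\min$-of-sum inequality, paralleling the role played by the linearity of expectation in Lemma \ref{bind1}.
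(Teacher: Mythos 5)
Your proof is correct, and its skeleton — mix the purported optimum with the fully informative experiment \(\bar{\pi}\), apply the intermediate-value theorem to \(\lambda \mapsto C(\pi_\lambda)\) to hit the budget exactly, then lower-bound the mixture's max-min value by letting the DM play branch-specific strategies — is exactly the paper's. Where you diverge is the closing inequality. The paper fixes the equilibrium strategy \(\sigma\) for \(\pi\), takes a worst-case prior \(\bar{\mu}\) for the mixed program \ref{heartprog}, and compares the two branches at that single prior: the \(\pi\)-branch at \(\bar{\mu}\) is weakly better than at \(\pi\)'s own worst case \(\mu^*\), while the full-information branch strictly beats the \((\pi,\sigma)\)-branch at \(\bar{\mu}\). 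You instead split the min via \(\min_{\mu}(f+g) \geq \min_{\mu} f + \min_{\mu} g\) to get \(T(\pi_{\lambda^*}) \geq \lambda^* T(\pi) + (1-\lambda^*) T(\bar{\pi})\) and derive strictness from \(\pi \notin \Pi^*\) — the same superadditivity device the paper reserves for the concavity result (Expression \ref{eq1}). Your route is arguably cleaner, since the paper's pointwise strict inequality at \(\bar{\mu}\) requires care, whereas yours only needs \(T(\bar{\pi}) > T(\pi)\); but note that this step uses the currently implicit fact that \(T(\bar{\pi}) \geq T(\pi')\) for \emph{every} experiment \(\pi'\) (at each prior \(\mu\), any \((\pi',\sigma)\) yields at most \(\mathbb{E}_{\mu}\max_{a}u(a,\theta)\), which \(\bar{\pi}\) with state-optimal actions attains, so the comparison survives the min over \(M\)). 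State that one line explicitly: \(\pi \notin \Pi^*\) alone is consistent with \(T(\pi) > T(\bar{\pi})\), in which case your chain would not deliver the contradiction. Finally, your insistence on the tagged (publicly observed) mixture is a useful clarification of what the paper's notation \(\lambda \pi + (1-\lambda)\bar{\pi}\) must mean for both the linearity of \(B(\mu,\cdot)\) in \(\lambda\) and the branch-wise strategy bound to be valid.
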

\begin{proof}
Mirroring the proof to Lemma \ref{bind1}, we suppose for the sake of contradiction that the constraint doesn't bind. As noted there, given a purported optimal \(\pi\), there exists some \(\lambda \in \left(0,1\right)\) such that \(\lambda \pi + (1-\lambda)\bar{\pi}\) is feasible, where (recall) \(\bar{\pi}\) is the fully informative experiment.

Let \(\sigma\) be an equilibrium strategy when the experiment is \(\pi\); and given \(\sigma\), let \(\mu^*\) be an arbitrary solution to 
\[\min_{\mu \in M} \mathbb{E}_{B(\mu, \pi)}\mathbb{E}_x \mathbb{E}_{\sigma}u(a,\theta)\text{.}\]

Next, let \(\bar{\sigma}\) be an action strategy when the experiment is \(\bar{\pi}\) that picks an arbitrary optimal action in every state. By construction, the DM's payoff, \(T\left(\lambda \pi + (1-\lambda)\bar{\pi}\right)\), is weakly greater than
\[\label{heartprog}\tag{\textcolor{OrangeRed}{\(\diamondsuit\)}}\min_{\mu \in M} \left\{\lambda \mathbb{E}_{B(\mu, \pi)}\mathbb{E}_x \mathbb{E}_{\sigma}u(a,\theta) + (1-\lambda) \mathbb{E}_{B(\mu, \bar{\pi})}\mathbb{E}_x \mathbb{E}_{\bar{\sigma}}u(a,\theta)\right\}\text{.}\]
Evidently, for an arbitrary solution to Program \ref{heartprog}, \(\bar{\mu}\), we have
\[\mathbb{E}_{B(\bar{\mu}, \pi)}\mathbb{E}_x \mathbb{E}_{\sigma}u(a,\theta) \geq \mathbb{E}_{B(\mu^*, \pi)}\mathbb{E}_x \mathbb{E}_{\sigma}u(a,\theta)\text{.}\]
Moreover, as \(\pi^{\max}\) is fully informative,
\[\mathbb{E}_{B(\bar{\mu}, \bar{\pi})}\mathbb{E}_x \mathbb{E}_{\bar{\sigma}}u(a,\theta) > \mathbb{E}_{B(\bar{\mu}, \pi)}\mathbb{E}_x \mathbb{E}_{\sigma}u(a,\theta)\text{.}\]
Combining these, we conclude that \(\pi\) is strictly sub-optimal, a contradiction.\end{proof}
Here is our last result: the max-min value of information is concave.
\begin{proposition}
    The max-min value of information, \(\bar{W}\), is weakly concave in \(\eta\), strictly so if \(\phi\) is strictly convex.
\end{proposition}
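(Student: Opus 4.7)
The plan is to mimic the proof of Proposition \ref{weakconv}, using Lemma \ref{bind2} in place of Lemma \ref{bind1}. Fix distinct \(\eta_{1},\eta_{2}\in[0,\bar{\eta}]\) with respective optimizers \(\pi_{1},\pi_{2}\) of Program \ref{bigstar}, so by Lemma \ref{bind2}, \(C(\pi_{i})=\eta_{i}\). Form the convex combination \(\pi^{\upsilon}\coloneqq\upsilon\pi_{1}+(1-\upsilon)\pi_{2}\), interpreted as the Markov kernel on the disjoint union \(S_{1}\sqcup S_{2}\) that draws from \(\pi_{1}(\theta)\) with probability \(\upsilon\) and from \(\pi_{2}(\theta)\) with probability \(1-\upsilon\). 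Because the signal reveals which of \(\pi_{1},\pi_{2}\) generated it, the DM can condition her action strategy on that label.

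The first step is to show that \(\pi^{\upsilon}\) is feasible for amount \(\upsilon\eta_{1}+(1-\upsilon)\eta_{2}\). By the construction of \(\pi^{\upsilon}\), \(B[\tilde{\mu},\pi^{\upsilon}]=\upsilon B[\tilde{\mu},\pi_{1}]+(1-\upsilon)B[\tilde{\mu},\pi_{2}]\), and so the inner integral \(\int c(x,\tilde{\mu})dB[\tilde{\mu},\pi^{\upsilon}](x)\) equals the analogous convex combination. Applying the (strict) convexity of \(\phi\), we get (strictly if \(\phi\) is strictly convex)
\[C(\pi^{\upsilon})\underset{(<)}{\leq}\upsilon C(\pi_{1})+(1-\upsilon)C(\pi_{2})=\upsilon\eta_{1}+(1-\upsilon)\eta_{2}\text{.}\]

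The second step bounds the value. Let \(\sigma_{i}\) be an optimal max-min strategy for experiment \(\pi_{i}\), and let \(\sigma^{\upsilon}\) be the composite that plays \(\sigma_{1}\) on signals from \(S_{1}\) and \(\sigma_{2}\) on signals from \(S_{2}\). For any \(\mu\in M\), the linearity of expectation yields that the DM's payoff under \((\pi^{\upsilon},\sigma^{\upsilon})\) at prior \(\mu\) equals
\[\upsilon\,\mathbb{E}_{B(\mu,\pi_{1})}\mathbb{E}_{x}\mathbb{E}_{\sigma_{1}}u(a,\theta)+(1-\upsilon)\,\mathbb{E}_{B(\mu,\pi_{2})}\mathbb{E}_{x}\mathbb{E}_{\sigma_{2}}u(a,\theta)\text{.}\]
Since the minimum of a sum dominates the sum of the minima, taking the min over \(\mu\in M\) and then the max over strategies gives
\[T(\pi^{\upsilon})\geq\upsilon\min_{\mu\in M}\mathbb{E}_{B(\mu,\pi_{1})}\mathbb{E}_{x}\mathbb{E}_{\sigma_{1}}u+(1-\upsilon)\min_{\mu\in M}\mathbb{E}_{B(\mu,\pi_{2})}\mathbb{E}_{x}\mathbb{E}_{\sigma_{2}}u=\upsilon T(\pi_{1})+(1-\upsilon)T(\pi_{2})\text{,}\]
yielding the weak concavity \(\bar{W}(\upsilon\eta_{1}+(1-\upsilon)\eta_{2})\geq T(\pi^{\upsilon})\geq\upsilon\bar{W}(\eta_{1})+(1-\upsilon)\bar{W}(\eta_{2})\). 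For strict concavity under strictly convex \(\phi\), the strict slack in the feasibility bound lets us mix \(\pi^{\upsilon}\) a small amount with the fully informative \(\bar{\pi}\) (exactly as in the Lemma \ref{bind2} argument) to hit \(C=\upsilon\eta_{1}+(1-\upsilon)\eta_{2}\) while strictly raising \(T\), producing the strict inequality.

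The main obstacle is Step 2: one must argue cleanly that mixing two experiments in this way gives the DM enough freedom to independently play \(\sigma_{1}\) and \(\sigma_{2}\), and that the min-of-sum inequality is what replaces the trivial linearity argument used in Proposition \ref{weakconv}. The rest is bookkeeping, and the strict case reduces to the replication trick already employed in Lemma \ref{bind2}.
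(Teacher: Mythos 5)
Your proposal is correct and takes essentially the same route as the paper: feasibility of the mixed experiment via the convexity of \(\phi\) together with Lemma \ref{bind2}, the min-of-the-sum versus sum-of-the-mins bound using the composite strategy \((\sigma_1,\sigma_2)\), and the slack-plus-replication argument (as in Lemma \ref{bind2}) for strictness when \(\phi\) is strictly convex. The only difference is presentational: you make explicit the labeled, disjoint-union interpretation of \(\upsilon\pi_1+(1-\upsilon)\pi_2\), which the paper leaves implicit.
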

\begin{proof}
Take distinct \(\eta_1\) and \(\eta_2\) and let \(\pi_1\) and \(pi_2\) be respective optimizers of Program \ref{bigstar}. By the same logic as that in the proof for Proposition \ref{weakconv}, \(\upsilon \pi_1 + \left(1-\upsilon\right) \pi_2\) is always feasible for information amount \(\upsilon \eta_1 + \left(1-\upsilon\right) \eta_2 \), strictly so if \(\phi\) is strictly convex.

Let \(\sigma_i\) (\(i = 1,2\)) be equilibrium strategies for experiments \(\pi_1\) and \(\pi_2\), respectively. Obviously,
\[\label{eq1}\tag{\textcolor{OrangeRed}{\(\bigstar\)}}\min_{\mu \in M} \left\{\eta \mathbb{E}_{B(\mu, \pi_1)}\mathbb{E}_x \mathbb{E}_{\sigma_1}u(a,\theta) + (1-\eta) \mathbb{E}_{B(\mu, \pi_2)}\mathbb{E}_x \mathbb{E}_{\sigma_2}u(a,\theta)\right\}\]
is weakly greater than
\[\eta \min_{\mu \in M} \mathbb{E}_{B(\mu, \pi_1)}\mathbb{E}_x \mathbb{E}_{\sigma_1}u(a,\theta) +  (1-\eta) \min_{\mu \in M}\mathbb{E}_{B(\mu, \pi_2)}\mathbb{E}_x \mathbb{E}_{\sigma_2}u(a,\theta)\text{,}\]
and Expression \ref{eq1}, itself, is weakly (strictly, if \(\phi\) is strictly convex) less than \(\bar{W}(\upsilon \eta_1 + \left(1-\upsilon\right) \eta_2)\). Thus, we have (strict) concavity. \end{proof}

\bibliography{sample.bib}

\end{document}